\theoremstyle{plain}
\newtheorem{thm}{Theorem}[section]
\theoremstyle{definition}
\newtheorem{defn}[thm]{Definition}
\newtheorem{conj}[thm]{Conjecture}
\newcommand{\tr}{\mathrm{tr}}
\newcommand{\calV}{\mathcal{V}}
\newcounter{mnotecount}[section]
\let\oldmarginpar\marginpar
\renewcommand\marginpar[1]{\-\oldmarginpar[\raggedleft\footnotesize #1]%
{\raggedright\footnotesize #1}}
\begin{document}

\title[When do Spacetimes Have CMC Slices]
      {When do Spacetimes Have Constant Mean Curvature Slices?}

\author[J. Dilts]{James Dilts}
\email{jdilts@ucsd.edu}

\author[M. Holst]{Michael Holst}
\email{mholst@ucsd.edu}

\address{Department of Mathematics\\
         University of California, San Diego\\
         La Jolla CA 92093}

\thanks{JD was supported in part by NSF DMS/RTG Award 1345013 and DMS/FRG Award 1262982.}
\thanks{MH was supported in part by NSF DMS/FRG Award 1262982 and NSF DMS/CM Award 1620366.}

\date{\today}
\subjclass[2010]{00000, 00000}
\keywords{general relativity, geometric analysis,
          nonlinear partial differential equations, 
          Cauchy surface, constant mean curvature, 
          near constant mean curvature, conformal method}

\begin{abstract} 
Many results in mathematical relativity, including results for both the initial data problem and for the evolution problem, rely on the existence of a constant mean curvature (CMC) Cauchy surface in the underlying spacetime. 
However, it is known that some spacetimes have no CMC Cauchy surfaces (slices). 
This is an obstacle for many results and constructions with these types of spacetimes, and is particularly worrisome since it is not known whether spacetimes that do have CMC slices are in any sense generic.
In this expository paper, we will discuss the known results about the existence (and non-existence) of CMC slices, examine the evidence for cases which are unknown, and make several conjectures concerning the existence of CMC slices and their generality.
\end{abstract}

\vspace*{-1.5cm}
\maketitle
\tableofcontents
\vspace*{-1.5cm}

%%%%%%%%%%%%%%%%%%%%%%%%%%%%%%%%%%%%%%%%%%%%%%%%%%%%%%%%%%%%%%%%%%%%%%%%%%%%%%
\section{Introduction}
\label{Intro}

As is well known, the problem of finding solutions to the Einstein Equations of general relativity can be split into finding initial data satisfying the constraint equations, and then evolving that data using the evolution equations.
Both portions have their own interesting problems.
What kinds of initial data are or are not possible?
Is there a way to parameterize all initial data? Is the evolution problem stable? 
What happens near singularities, and what kinds of singularities can occur?
Much progress has been made in recent decades on these kinds of problems. But looking at many of these papers, a common theme appears:
\emph{Constant mean curvature (CMC) initial data makes everything easier.}

In finding initial data in general relativity, the commonly used conformal method involves solving a system of coupled equations (the conformal constraint equations), in which the mean curvature function is one of the freely-specified parameters. 
If the chosen mean curvature is constant, then the equations decouple, making finding solutions much easier. 
Because of this, the CMC case of the initial data question is mostly understood.
For instance, a parameterization of all closed CMC initial data has been found (see \cite{Isenberg95}).
However, the conformal method in the non-CMC case is a completely different story, and is one of the main motivations for this study; we will come back to this shortly.

In the evolution problem in general relativity, there is some coordinate freedom; you may pick a lapse and shift arbitrarily. 
Obviously, some choices are better adapted to proving estimates than others.
If you start with CMC initial data, there is a choice of lapse and shift that allow your time function to be the (constant) mean curvature of each time slice.
This choice ends up 
being useful in proving various key estimates. 
For example, it is used in the recent resolution of the bounded $L^2$ curvature conjecture by Klainerman, Rodnianski, and Szeftel \cite{KRS16}, which states, essentially, that the evolution of the spacetime can continue as long as the $L^2$ norm of the curvature remains bounded.

In both of these examples of results for the Einstein constraint and evolution equations, the authors make the assumption that the spacetimes they are considering have a CMC slice (Cauchy data). 
Unfortunately, not all spacetimes have CMC slices!
This was first observed by Bartnik in \cite{Bartnik88}, and vacuum examples were later found by Chru\'sciel, Isenberg, and Pollack in \cite{CIP05}.
This has serious implications for many of the results that have been obtained for both the Einstein constraint and evolution equations.

For those investigating initial data through study of the Einstein constraint equations, one major goal is the parameterization of all initial data.
The conformal method works well as this parameterization for CMC initial data but, unfortunately, it fails for non-CMC data.
The solution space of the conformal constraint equations for far-from-CMC data exhibit folds and blowup in unexpected places. 
This has been shown analytically in some cases \cite{Maxwell11, Nguyen15}, and numerically in others \cite{DHKM17}.
The behavior of the conformal constraint equations in the far-from-CMC regime is very complicated. 
However, if every spacetime had a CMC Cauchy surface (slice), while the conformal method would not parameterize all initial data, it would at least parameterize initial data for every spacetime.
Unfortunately, not all spacetimes have CMC slices, but a similar statement could be made as long as spacetimes with CMC slices are generic.

Many of the most important conjectures in mathematical relativity ask whether generic spacetimes have (or don't have) certain properties or features, such as Cauchy horizons.
However, in many papers about the evolution problem, assuming the existence of a CMC slice is a key first step in the analysis.
While this may be more a limitation of methods rather than an actual obstruction, it is still worrying.
These results (such as the $L^2$ curvature theorem) cannot be used to prove anything about generic spacetimes, unless it is known that a generic spacetime has a CMC slice.
That is currently unknown.
Additionally, is it possible that the spacetimes without CMC slices are an obstruction that makes proving these results more difficult?
The main concern is that it is unknown how general spacetimes without CMC slices are.
While there are a few examples, they are very special examples, with high levels of symmetry.
Is there an open set of such spacetimes? Are they ubiquitous, or very special?

In this expository paper, we will discuss known results about the existence and non-existence of constant mean curvature slices in spacetimes, possible directions for investigation, and make a few conjectures.

\section{Technical Background}
\label{techIntro}

A Lorentzian manifold $\calV$ is a \emph{cosmological spacetime} if it is
globally hyperbolic with compact Cauchy surfaces and satisfies the 
\emph{timelike convergence condition}, 
\begin{equation}\label{def:TCC}
  R_{\mu\nu} T^\mu T^\nu \geq 0\textrm{, for every timelike vector } T.
\end{equation} If the spacetime also obeys the Einstein equations, this is
equivalent to the strong energy condition. Many results included will not
require that $\calV$ satisfies the Einstein equations, though we are 
predominately interested in their solutions. By \emph{slice}, we will
always mean a compact Cauchy surface. 

Intuitively, the mean curvature gives the expansion and contraction rate
of the universe. We will use the sign convention that $\tr k = H<0$ means the universe is 
contracting to the future. It is well known \cite{HE} that if $H<0$ on a slice,
then all future timelike geodesics must be incomplete (in the globally hyperbolic
development), with a uniform upper bound on their proper length of $3/-H$.

There are a number of well known, useful facts about CMC slices 
in cosmological spacetimes, when they
exist (see \cite{Bartnik88, MT80}). For instance,
if a CMC slice exists in $\calV$, with $H \neq 0$, it is necessarily unique.
For maximal slices, with $H \equiv 0$, we can only say that the spacetime is static,
with a timelike Killing field $T$ with $R_{\mu\nu}T^\mu T^\nu = 0$. 

When a spacetime has a CMC slice, there is a foliation, at least locally, 
by CMC slices. The mean curvatures vary along the foliation monotonically, except,
sometimes, when $H\equiv 0$. In the maximal case, nearby slices may also be maximal,
but if they are, the spacetime is static. A major question (see, for instance, 
\cite{Rendall97}) is whether this foliation covers the spacetime, and whether they
necessarily achieve all possible mean curvatures.\footnote{If the topology of the 
slices do not allow a metric with positive scalar curvature, the spacetime
cannot have a maximal slice. Thus, all allowable mean curvatures would be $(-\infty,0)$
or $(0,\infty)$.}

For any two slices with mean curvatures $H_1$ and $H_2$, and one in the future 
of the other, then for any mean curvature $H$ satisfying 
$\sup H_1 \leq H \leq \inf H_2$, then there is a slice with mean curvature $H$
in the spacetime, between the other slices.

One common use for this result is that, if $\calV$ has CMC slices of mean
curvature $H_1\leq H_2$, then the region between the slices is 
foliated by CMC slices, with mean curvatures monotonically increasing along
the foliation from $H_1$ to $H_2$.

It will also be useful to define a spacetime ray and line.

\begin{defn}
  A future (resp. past) \emph{ray} is a future (resp. past) timelike path
   with infinite affine length.
\end{defn}
\begin{defn}
  A \emph{line} is a a timelike \emph{geodesic} with infinite affine length
   to the future \emph{and} past, which is also \emph{globally} maximizing in distance.
\end{defn}

The earlier mentioned result, that if there is a slice with $H<0$, then timelike
geodesics have an upper bound on length to the future, shows that the existence
of a CMC implies the nonexistence of rays, either to the past or future, depending 
on the sign. On the other hand, the importance of lines is shown by the following
theorem from \cite{Bartnik88}.

\begin{thm}\label{thm:LineMeansProduct}
  A globally hyperbolic spacetime satisfying the timelike convergence criterion
  \eqref{def:TCC} and admitting a line is a metric product. Thus, it has maximal 
  slices.
\end{thm}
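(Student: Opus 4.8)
The plan is to prove this splitting theorem by combining the maximum principle for the distance function along the line with a rigidity argument of Cheeger–Gromoll type, adapted to the Lorentzian signature.
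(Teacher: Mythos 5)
You should first know that the paper contains no proof of this statement: it is quoted from \cite{Bartnik88}, and it is the Lorentzian splitting theorem (conjectured by Yau, proved by Eschenburg under a completeness hypothesis, by Galloway for globally hyperbolic spacetimes --- the version relevant here --- and by Newman in the timelike geodesically complete case). Your one-sentence plan correctly identifies the circle of ideas --- the theorem is indeed proved by adapting the Cheeger--Gromoll argument --- but as written it is a restatement of the goal rather than a proof: the clause ``adapted to the Lorentzian signature'' is precisely where the entire difficulty of this (formerly long-open) problem lives, and none of the steps that make the adaptation work are supplied.

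Concretely, the naive transplant of the Riemannian argument breaks at several points you would need to address. First, the Lorentzian distance (time separation) and the Busemann functions $b^{\pm}$ associated to the two rays of the line are a priori only semicontinuous; global hyperbolicity gives continuity of the time separation, but regularity of $b^{\pm}$ near the line requires real work. Second, and most seriously, the Riemannian proof applies the \emph{elliptic} maximum principle to $b^{+}+b^{-}$ via the Laplacian comparison under $\mathrm{Ric}\geq 0$; in Lorentzian signature the d'Alembertian is hyperbolic, so there is no maximum principle for it, and the curvature condition \eqref{def:TCC} controls Hessians only in timelike directions. The actual proofs circumvent this by restricting to spacelike hypersurfaces: one constructs a smooth \emph{maximal} spacelike hypersurface through a point of the line (Bartnik-style, using barriers built from level sets of $b^{\pm}$) and runs the elliptic strong maximum principle there --- so the maximal surfaces are an ingredient of the proof, not merely its corollary. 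Third, co-rays and asymptotes to a timelike ray can degenerate to null geodesics, so even the existence of the timelike asymptotes needed to build flat strips must be argued, which is where global hyperbolicity (or completeness near the line) enters. Fourth, the local splitting along the line must be globalized --- every point must be shown to lie on a line and the flat strips fitted together --- to obtain the metric product $(\R\times\Sigma,\,-dt^{2}+h)$, whose totally geodesic slices $\{t\}\times\Sigma$ are the maximal slices claimed. Your plan omits all four steps; supplying them is the content of the theorem, so as it stands the proposal cannot be credited as a proof.
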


\section{Which Spacetimes have CMC Slices?}

The most basic question to ask is, ``Under what conditions does a spacetime
have or not have CMC slices?"

Since CMC slices are unique, it is easy to cut enough out of a spacetime to create 
a spacetime without a CMC slice. We are not concerned with such examples. In order to
avoid that, we will consider only maximal globally hyperbolic developments (MGHDs).

Unfortunately, since work on barrier methods in the 80's, by Bartnik and others,
there are relatively few papers
that deal with the existence of a CMC slice. (Most papers assume three is one, then
investigate the foliation by CMC slices.) Various barrier methods are well developed.
The barriers are essentially used to guarantee that the slices avoid the singularities 
at future or past infinity.

One type of barrier was used in \cite{Bartnik88}. In that paper, he showed
that if $\calV - I(p)$ is compact for some point $p\in \calV$, then there is a
CMC slice, passing through $p$. The barrier in the proof is $\partial I(p)$,
which is also compact. He uses this boundary to show that there are CMC slices
through $p$, though singular at $p$, with any desired mean curvature. He then
shows that one of these slices must in fact be smooth.

The condition that $\calV-I(p)$ is compact is an interesting condition.
It roughly states that an observer at $p$ could observe all events that 
happened in the universe, sufficiently far in the past. There is no 
``hidden" portion of the universe. A similar statement can be said about the
future. However, it is not a necessary condition. Spherical FRLW spacetimes 
don't satisfy it (light rays emitted from the big bang don't travel across the 
universe before recollapse, and so there can be no such $p$),
but they still have CMC slices.

A more common kind of barrier is a slice with a certain mean curvature.
As discussed in Section \ref{techIntro}, if we can find two slices,
one in the future of the other, and with
mean curvatures $\sup H_1 \leq \inf H_2$, then there is a CMC slice between them.

This immediately shows that a spacetime with a crushing singularity has CMC slices.
A (future) \emph{crushing singularity} means that there is 
a sequence of slices $\Sigma_i$
approaching the future edge of the MGHD, with mean curvatures $H_i$ 
satisfying $\sup_{\Sigma_i} H_i \to -\infty$. If you have a crushing singularity,
pick any slice $\Sigma$ as one barrier, then use the crushing singularity
to find a slice $\Sigma'$ with $\sup_{\Sigma'} H'\leq \inf_\Sigma H$.
A CMC slice is then between them.

It is then useful to find other conditions which guarantee that your future
singularity is crushing. One such condition is the strong curvature 
singularity.

\begin{defn}
  A timelike geodesic $\lambda(t)$ terminates in a \emph{strong curvature singularity} at
  affine parameter value $t_0$ if the following holds: Let $\mu(t)$ be the three-from
  on the normal space to $\lambda'(t)$ determined by any three linearly independent
  vorticity-free Jacobi fields along $\lambda(t)$. If $\mu(t)$
  vanishes for at most finitely many $t$ is some neighborhood $[t_1, t_0)$ of $t_0$,
  then we require $\lim_{t\to t_0} \mu(t) = 0$. 
\end{defn}

For cosmological spacetimes, if $\mu(t)$ vanishes finitely many times, the limit exists.
The basic equation is 
\begin{equation}
  \frac{d^2}{dt^2} \mu(t)^{1/3} + \frac13(R_{ab} v^a v^b + 2 \sigma^2)\mu^{1/3} = 0,
\end{equation} and, importantly, note that the second term is positive, which
makes $\mu$ behave well. Using this equation, it can be shown that any global strong
curvature singularity is also a crushing singularity.

\begin{thm}\cite[Thm 6]{MT80} 
  For a cosmological spacetime, if there exists a Cauchy surface from which all 
  orthogonal (future or past) 
  timelike geodesics end in a strong curvature singularity, then
  that singularity is crushing, and so there is a CMC slice.
\end{thm}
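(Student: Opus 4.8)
The plan is to show that the hypothesis forces the mean curvature of suitably chosen slices to blow up to $-\infty$ (in the future case), which is exactly the definition of a crushing singularity; the final conclusion then follows immediately from the barrier argument recorded earlier in the paper (a crushing singularity yields CMC slices). So the real content is to convert ``every orthogonal geodesic ends in a strong curvature singularity'' into a uniform blowup statement.

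First I would set up the normal geodesic congruence emanating from the given Cauchy surface $\Sigma$. Parametrizing by proper time $t$, each orthogonal future geodesic $\lambda(t)$ carries a volume density $\mu(t)$ for the congruence (the three-form built from vorticity-free Jacobi fields), and by definition the strong curvature singularity condition says $\mu(t) \to 0$ as $t \to t_0$ along each such geodesic. The key tool is the displayed Raychaudhuri-type equation
\begin{equation*}
  \frac{d^2}{dt^2}\,\mu(t)^{1/3} + \tfrac13\bigl(R_{ab} v^a v^b + 2\sigma^2\bigr)\mu(t)^{1/3} = 0.
\end{equation*}
Because the timelike convergence condition \eqref{def:TCC} guarantees $R_{ab}v^a v^b \geq 0$ and the shear satisfies $\sigma^2 \geq 0$, the coefficient is nonnegative, so $\mu^{1/3}$ is a concave function of $t$. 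A positive concave function that tends to $0$ must approach $0$ with nonpositive derivative, and in fact concavity pins down the rate: I would extract from concavity a bound showing that the logarithmic derivative $\tfrac{d}{dt}\log\mu = 3\,\tfrac{d}{dt}\log\mu^{1/3}$ diverges to $-\infty$ as $t \to t_0$.

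Next I would relate this congruence quantity to the geometry of the slices. The mean curvature of a slice in the congruence is, up to sign conventions, the logarithmic rate of change of the volume element $\mu$ along the normal flow, i.e.\ $H = \tfrac{d}{dt}\log\mu = \di$ of the unit normal. Thus the divergence of $\tfrac{d}{dt}\log\mu$ to $-\infty$ is precisely the statement that $H \to -\infty$. To make this into a crushing sequence I would flow $\Sigma$ forward by constant proper time $t$, obtaining slices $\Sigma_t$, and argue that the supremum of $H$ over $\Sigma_t$ tends to $-\infty$ as $t$ approaches the (possibly pointwise-varying) singular time. Here compactness of the Cauchy surface is essential: it lets me pass from the pointwise blowup along each geodesic to a uniform blowup $\sup_{\Sigma_t} H \to -\infty$, giving a genuine crushing singularity in the sense defined above, and hence a CMC slice by the earlier barrier result.

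The main obstacle I expect is the uniformity step. The strong curvature hypothesis is a statement about each individual geodesic, with its own affine singular value $t_0$, and these values need not coincide across the slice; moreover the constant-proper-time flow $\Sigma_t$ may fail to be a smooth spacelike slice once geodesics begin to focus or cross, so conjugate points and the breakdown of the normal exponential map must be handled. The delicate part is therefore to choose the approximating slices correctly — not necessarily the equal-proper-time level sets, but slices adapted so that they remain smooth and spacelike while forcing $\sup H \to -\infty$ — and to control the shear and focusing uniformly using the concavity of $\mu^{1/3}$ together with compactness of $\Sigma$. Once the uniform bound $\sup_{\Sigma_t} H \to -\infty$ is in hand, combining it with any fixed slice as the other barrier and invoking the CMC existence result between barriers completes the proof.
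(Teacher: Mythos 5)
Your proposal follows essentially the same route as the paper, which likewise derives the crushing property from the displayed equation for $\mu^{1/3}$ (the nonnegative coefficient, via the timelike convergence condition, makes $\mu^{1/3}$ concave, forcing the expansion --- i.e.\ the mean curvature of the orthogonal slices --- to diverge as $\mu \to 0$) and then invokes the earlier barrier argument that a crushing singularity yields a CMC slice. The uniformity and focal-point issues you correctly flag as the delicate step are exactly the technical content the paper itself does not carry out but defers to the cited proof in \cite{MT80}, so your outline matches the paper's treatment.
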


Another theorem in that same paper gives another condition guaranteeing that a
singularity is crushing. It has to do with the $c$-boundary. The future 
$c$-boundary is essentially constructed by associating each point $p$ of the spacetime
with its past $I^-(p)$. The future boundary is then defined as the sets $I^-(\gamma)$,
where $\gamma$ is an inextendible future directed timelike path. (The technical details
for these ``indecomposable pasts" can be complicated, but we won't need them in
this paper.)

\begin{thm}\cite[Thm 6]{MT80}
 Suppose a cosmological spacetime with global future singularity has $c$-boundary
 consisting of a single ``point," i.e., that all inextendible future timelike paths
 have the same past (and thus, that past is all of the spacetime.) 
 Then the singularity is crushing, and so a CMC slice exists.
\end{thm}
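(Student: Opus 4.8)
The plan is to reduce the theorem to a crushing statement and then invoke the two-barrier result recorded in Section~\ref{techIntro}: once we exhibit, arbitrarily far to the future, a Cauchy surface $\Sigma'$ with $\sup_{\Sigma'}\tr k$ as negative as we like, we may pair it with any fixed slice $\Sigma$ satisfying $\sup_{\Sigma'}\tr k \le \inf_\Sigma \tr k$ to obtain a CMC slice between them. Thus it suffices to prove that the single-point $c$-boundary hypothesis forces the future singularity to be crushing, i.e.\ to produce a sequence of slices $\Sigma_i$ approaching the future edge with $\sup_{\Sigma_i}\tr k \to -\infty$.

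First I would translate the hypothesis into a usable geometric statement. Saying that all inextendible future timelike curves share the same past, equal to all of $\calV$, is equivalent to $I^-(p)\nearrow\calV$ as $p$ runs to the future boundary point; equivalently, the closed sets $\calV\setminus I^-(p)$ shrink to the empty set. This is precisely a limiting form of Bartnik's compactness condition ``$\calV\setminus I^-(p)$ compact,'' which (as recalled above) produces a CMC slice through $p$ using $\partial I^-(p)$ as a barrier. So the conceptual core is to run Bartnik's barrier construction along a sequence $p_i$ approaching the singularity rather than at a single interior point.

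Concretely, I would choose $p_i$ tending to the future $c$-boundary point and build, from (smoothings of) the past cones $\partial I^-(p_i)$, spacelike Cauchy surfaces $\Sigma_i$ lying ever closer to the singularity. The blow-up of the mean curvature is then forced by focusing: along the future-normal congruence the quantity $\mu^{1/3}$ obeys the displayed equation whose second term $\tfrac13(R_{ab}v^av^b+2\sigma^2)$ is nonnegative by the timelike convergence condition, so the congruence focuses and, as the surfaces are pushed into the vertex region of the cones, $\tr k$ is driven to $-\infty$. Equivalently, one can argue by contradiction: if $\sup_{\Sigma_i}\tr k$ stayed bounded below, a limit-curve argument in the globally hyperbolic $\calV$ would extract a future-inextendible timelike geodesic of infinite affine length, i.e.\ a future ray, contradicting the assumed global future singularity.

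The hard part will be the barrier construction itself: turning the achronal, merely Lipschitz sets $\partial I^-(p_i)$ into genuine smooth spacelike \emph{Cauchy} surfaces and proving that $\sup_{\Sigma_i}\tr k\to-\infty$ \emph{uniformly} over each whole slice. Unlike the strong-curvature-singularity theorem, where focusing of the orthogonal geodesics is handed to us pointwise, here the blow-up must be extracted from the global single-point structure, so the key estimate is to show that the ``time remaining to the singularity'' collapses uniformly across $\Sigma_i$ --- it is exactly the single-point $c$-boundary hypothesis (all observers end at the same boundary point) that prevents part of a slice from lingering with bounded mean curvature while the rest crushes. Once uniform crushing is established, the reduction in the first paragraph finishes the proof.
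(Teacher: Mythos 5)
Your opening reduction (crushing singularity plus the two-barrier interpolation gives a CMC slice) is exactly the route the paper has in mind --- note, though, that the paper itself quotes this theorem from \cite{MT80} without proof, so the entire mathematical content is the crushing statement, and that is precisely what your proposal does not establish. You flag the uniform blow-up $\sup_{\Sigma_i}\tr k \to -\infty$ as ``the hard part,'' and it is: smoothing the merely Lipschitz achronal sets $\partial I^-(p_i)$ into spacelike \emph{Cauchy} surfaces, controlling their mean curvature, and showing that the near-vertex focusing does not leave part of a slice ``lingering'' with bounded $\tr k$ is the whole theorem. The Raychaudhuri/focusing input you cite (nonnegativity of $\tfrac13(R_{ab}v^av^b+2\sigma^2)$ under the timelike convergence condition) gives monotonicity along individual congruences but by itself yields no uniform estimate over a slice; the single-point $c$-boundary hypothesis must enter quantitatively, and your proposal only gestures at where. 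Also, the claim that the hypothesis is ``precisely a limiting form'' of Bartnik's condition ($\calV - I(p)$ compact) overstates: Bartnik's theorem concerns a single point and produces a CMC slice through $p$ by an entirely different mechanism (singular CMC slices through $p$, one of which is shown to be smooth), so it does not supply the missing barrier construction here.

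More seriously, your fallback contradiction argument is not a valid repair. It never uses the single-point $c$-boundary hypothesis: if ``$\sup_{\Sigma_i}\tr k$ bounded below plus a global future singularity yields a future ray, contradiction'' were correct, then \emph{every} global future singularity would be crushing. That would settle the paper's main Conjecture \ref{mainConj} and is inconsistent with the paper's own discussion of Rendall's example \cite{Rendall97}, a cosmological MGHD in which a shell-crossing singularity halts the CMC foliation at finite mean curvature, so that the singularity is not crushing. The step also fails internally: limit-curve arguments in a globally hyperbolic spacetime produce inextendible causal curves, not curves of \emph{infinite} affine length, and bounded mean curvature on a sequence of slices gives no lower bound on the proper time remaining to the singularity, so no ray can be extracted. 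In sum, the legitimate branch of your plan has its central estimate missing, and the only branch you carry to completion rests on a false implication.
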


Those are the most general results that we are aware of for proving that a CMC
slice exists.

Let's now turn our attention to the known examples of spacetimes without 
CMC slices. Unfortunately, only a few examples are known.

Though this paper is predominantly focused on spatially compact spacetimes, we should
briefly mention results for spatially asymptotically Euclidean manifolds. In these
spacetimes, an CMC asymptotically Euclidean Cauchy surface can only have $H\equiv0$,
i.e., be maximal. If the spacetime obeys the weak energy condition, then the
Cauchy surface must have a metric of positive scalar.

However, just as in the compact case, there are topological 
obstructions to this. An explicit example was found by Brill in \cite{Brill82}, but it
is now known \cite{DM15}
that any asymptotically Euclidean manifold, if it can be one point
compactified (one point for each end) to a manifold not allowing a metric of positive
scalar curvature, itself cannot have an asymptotically Euclidean metric of positive 
scalar curvature. This gives well-understood topological restrictions to the existence
of maximal Cauchy surfaces in spatially asymptotically Euclidean spacetimes.

For spatially compact spacetimes, the original, explicit example
comes from Bartnik's paper \cite{Bartnik88}. To construct his example, take the
maximally extended Schwarzschild spacetime. Then, on each of the ends, cut off each
end and attach a torus with a homogeneous dust in it. Bartnik uses a particular model
(Tolman-Bondi)
that allows gluing between these regions in such a way that the spacetime
evolves nicely and satisfies the strong energy condition. 

Bartnik proves that this spacetime has no CMC slices in two ways. 

The first way is by symmetry. The spacetime has a ``time inversion" symmetry. 
In particular, if there is a non-maximal CMC slice, then by this symmetry 
there is one with the same mean curvature but opposite sign.

In between these slices, you can foliate the spacetime with CMC slices, and thus
find a maximal slice. But the spacetime satisfies the weak energy condition, and so 
the maximal slice has nonnegative scalar curvature by the constraint equations.
This contradicts the topology of the slices, $T^3\# T^3$.

This proof method is simple, and shows that the topology of the slices may be
important in proving more general results.
However, symmetry is obviously not a generic property, and so this proof seems
limited to very symmetric data, a serious limitation.

However, it also allowed the development of a family of examples in \cite{CIP05}.
In this paper, they take $T^3$ vacuum solutions,
then glue them (using IMP gluing) symmetrically. The important addition is that 
these examples are vacuum, and so the dust from Bartnik's example was not necessary.
\footnote{It is known that dust can cause shell-crossing singularities (though there
are none in Bartnik's example), and so dust is
sometimes avoided.}
However, it is still hard to generalize these examples due to the symmetry requirement.

Bartnik's second proof uses timelike path incompleteness.

\begin{thm}[\cite{Bartnik88}]\label{thm:rays?NoCMC}
  If a globally hyperbolic, cosmological
  spacetime $\calV$ has a future ray and a past ray, but no line,
   then $\calV$ has no
  CMC Cauchy surfaces.
\end{thm}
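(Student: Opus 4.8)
The plan is to argue by contradiction: assume $\calV$ has a CMC Cauchy surface $\Sigma$ with constant mean curvature $\tr k = H$, and split into the cases $H<0$, $H>0$, and $H=0$, producing in each case either the failure of one of the two rays or the existence of a line.

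First I would treat the case $H<0$. The focusing result quoted in Section~\ref{techIntro} bounds the proper length of every future timelike \emph{geodesic} issuing from $\Sigma$ by $3/(-H)$; to reach rays (which are only timelike paths, not geodesics) I would promote this to \emph{all} future timelike curves via the Lorentzian distance function. In the globally hyperbolic setting the time separation $d(\Sigma,q)$ from the Cauchy surface is finite and realized by a normal geodesic (Avez--Seifert), so the focusing bound gives $d(\Sigma,q)\le 3/(-H)$ for every $q\in J^{+}(\Sigma)$, and hence every future timelike curve from $\Sigma$ has length at most $3/(-H)$. Since $\Sigma$ is Cauchy, a future ray must cross $\Sigma$ and thereafter remain in $J^{+}(\Sigma)$ with infinite remaining affine length, which is impossible. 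Thus $H<0$ contradicts the existence of a future ray, and the dual statement for $H>0$ contradicts the past ray. Consequently, \emph{no} CMC slice of $\calV$ can have $H\neq 0$.

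It remains to rule out $H=0$. Since the previous step shows every CMC slice must be maximal, the local CMC foliation about $\Sigma$ furnished in Section~\ref{techIntro} consists entirely of maximal leaves (any non-maximal leaf would be a CMC slice with $H\neq 0$, already excluded), so $\calV$ is static with a timelike Killing field $T$ satisfying $R_{\mu\nu}T^\mu T^\nu=0$. I would then upgrade this to a genuine metric splitting $\calV \cong (\R\times\Sigma,\,-dt^2+h)$ with $\Sigma$ compact, which is exactly the rigidity underlying Theorem~\ref{thm:LineMeansProduct}, and observe that in such a product the vertical curves $t\mapsto(t,x_0)$ are complete timelike geodesics that globally maximize Lorentzian distance between any two of their points. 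Each such curve is therefore a line, contradicting the hypothesis that $\calV$ has none.

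I expect the third case to be the main obstacle. The argument for $H\neq 0$ is a clean application of the focusing bound once one passes to the distance function, but the maximal case hides two nontrivial inputs: the implication ``all nearby CMC slices maximal $\Rightarrow$ static,'' and the passage from a static spacetime with $R_{\mu\nu}T^\mu T^\nu=0$ under the timelike convergence condition~\eqref{def:TCC} to an honest metric product. The product-to-line conclusion is then elementary, but establishing the product structure is the crux and is where the full strength of~\eqref{def:TCC} (the rigidity accompanying the Lorentzian splitting theorem) is needed.
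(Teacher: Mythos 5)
Your proposal follows essentially the same route as the paper's proof: rule out $H\neq 0$ using the focusing bound of \cite{HE} against the future and past rays (the paper cites that result directly for timelike paths, where you supply the Avez--Seifert promotion from geodesics to paths), and then in the maximal case pass through the local CMC foliation, whose non-maximal leaves are already excluded, to a static spacetime containing a line. The steps you flag as the crux (static with $R_{\mu\nu}T^\mu T^\nu=0$ $\Rightarrow$ metric product $\Rightarrow$ line) are precisely what the paper compresses into ``the spacetime is static and thus contains a line,'' so your argument is correct and matches the paper's, with somewhat more detail supplied.
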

\begin{proof}
Suppose $\calV$ did have a CMC slice, 
  with $H\not\equiv 0$. Then by \cite[pg 274]{HE}, all future (or past) timelike
  paths have finite length (depending on the sign of $H$.) This is a contradiction. 
  
  If $H\equiv 0$, the spacetime is locally foliated either by CMC slices
  with nonzero $H$ (in which case the first paragraph applies),
  or it is foliated by maximal slices. If this maximal
  foliation covers the spacetime, the spacetime is static and thus 
  contains a line. This is a contradiction.
\end{proof}

The explicit example constructed by Bartnik has rays (complete timelike paths
to the past or future), because the Schwarzschild spacetime does. However, it clearly
has no line (and, in fact, no complete timelike geodesics at all), and so cannot
have a CMC slice.

As this second proof doesn't use symmetry, 
this approach seems a more likely candidate
for generalization. Indeed, our main conjecture is that this condition is
both necessary and sufficient. 

\begin{conj}
 A globally hyperbolic, cosmological spacetime has no CMC Cauchy surfaces 
 if and only if it 
 contains a future and a past ray, but no line.
\end{conj}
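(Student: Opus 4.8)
The reverse implication of the biconditional is already established: it is exactly Theorem~\ref{thm:rays?NoCMC}. So only the forward direction remains, and I would attack it by contraposition. Assuming the spacetime is \emph{not} described by the triple (future ray) and (past ray) and (no line), I would produce a CMC slice. Negating the conjunction leaves the three cases (no future ray), (no past ray), and (admits a line), and it suffices to treat each in isolation, since in every case a single CMC slice is all that the conclusion requires.

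The ``admits a line'' case is immediate from Theorem~\ref{thm:LineMeansProduct}: such a spacetime is a metric product and hence carries a maximal ($\tr k \equiv 0$) slice. The two ray cases are interchanged by time reversal, so it is enough to handle the absence of a future ray. My plan here is to run the barrier argument: fix any slice $\Sigma$ as one barrier, and manufacture a second slice $\Sigma'$ to its future with $\sup_{\Sigma'} H' \le \inf_\Sigma H$, whereupon the monotone interposition result of Section~\ref{techIntro} inserts a CMC slice between them. Concretely, I would aim to show that the absence of a future ray forces a future crushing singularity, that is, a sequence of slices approaching the future edge with $\sup H_i \to -\infty$; once this is in hand the barrier conclusion is automatic.

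To connect ``no future ray'' to crushing I would introduce the future cosmological time function $\hat\tau(p)$, the supremum of the Lorentzian lengths of future-directed timelike curves issuing from $p$. The absence of a future ray is precisely the assertion that $\hat\tau$ is finite everywhere, and I would invoke the regularity theory for cosmological time functions, in the manner of Andersson, Galloway, and Howard, to upgrade finiteness to a regular time function whose level sets are Lipschitz Cauchy surfaces degenerating onto the future boundary. Along such a foliation the timelike convergence condition \eqref{def:TCC} drives the Raychaudhuri-type quantity upward, and I would combine this with the comparison inequality for the Jacobi three-form $\mu$ displayed above to extract slices—smoothed, if necessary, by the prescribed-mean-curvature barrier technique of \cite{Bartnik88, MT80}—whose mean curvature tends to $-\infty$.

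The main obstacle is exactly this last implication. ``No future ray'' is a global, integral statement about lengths of timelike curves, whereas ``crushing'' is a pointwise statement about the mean curvatures of approaching slices, and the gap between the two need not close. The worry is that finiteness of $\hat\tau$ may be strictly weaker than a uniform collapse of the spatial geometry: one can imagine the future boundary being reached in finite proper time while $\sup H$ fails to run off to $-\infty$, so that regularity of the cosmological time function alone does not deliver the barrier. Resolving this will likely require either a sharp synchronization between the $\hat\tau$-level sets and a true mean-curvature foliation, or a genuinely new barrier construction; and should the implication in fact fail, precisely such a spacetime would be the natural candidate for a counterexample to the conjecture itself.
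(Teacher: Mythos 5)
You have not proved the statement, and neither does the paper: it is stated as a conjecture, and the paper's own discussion goes exactly as far as your proposal does. Your decomposition is the paper's: the reverse implication is Theorem~\ref{thm:rays?NoCMC}; negating the conjunction leaves the line case, which is dispatched by Theorem~\ref{thm:LineMeansProduct} (a metric product has maximal slices); time reversal reduces the two ray cases to one; and what remains is precisely Conjecture~\ref{mainConj} --- that global future (or past) timelike incompleteness forces a CMC slice. So up to that point you have accurately reconstructed the paper's reduction, but you have not advanced beyond the open part.

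The genuine gap is the step you yourself flag, and it is worth being precise about why your proposed route through crushing singularities cannot close as stated. The paper cites Rendall's example \cite{Rendall97}, in which a dust shell-crossing singularity halts the CMC foliation at a \emph{finite} mean curvature: this exhibits a global singularity that is not crushing, so the implication ``no future ray $\Rightarrow$ future crushing singularity'' is false in general, and any argument factoring through crushing must either exclude such behavior (e.g.\ by restricting to vacuum) or be abandoned. Your appeal to cosmological time functions has a secondary defect as well: in the Andersson--Galloway--Howard theory, finiteness of $\hat\tau$ everywhere is \emph{not} sufficient for regularity --- one additionally needs $\hat\tau \to 0$ along inextendible causal curves approaching the relevant boundary, a hypothesis not implied by the nonexistence of rays; and even granting a regular cosmological time foliation, its level sets are merely Lipschitz and nothing forces $\sup H \to -\infty$ along them. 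What you have written is therefore a correct and honest roadmap to the open problem, including a sensible identification of where a counterexample would live, but it is a research program, not a proof.
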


To see why this conjecture is reasonable, recall Theorem \ref{thm:LineMeansProduct},
which says that if a spacetime has a line, then your spacetime is a metric product, 
and thus have maximal slices.
The difficult part of the conjecture is then to show the following:

\begin{conj}\label{mainConj}
  If, in a cosmological spacetime, every future (or past) timelike path is incomplete
  (i.e., of finite length), then there exists a CMC slice.
\end{conj}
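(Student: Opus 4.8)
The plan is to prove the implication directly, by reducing it to the production of a mean-curvature barrier near the future singularity and then invoking the sandwich result of Section~\ref{techIntro}. By the time-reversal symmetry of the hypothesis, assume every \emph{future} timelike path in $\calV$ is incomplete, i.e.\ that $\calV$ has no future ray. Fix any Cauchy surface $\Sigma_0$ and set $m=\inf_{\Sigma_0}H$. It suffices to produce a single Cauchy surface $\Sigma_+$ to the future of $\Sigma_0$ with $\sup_{\Sigma_+}H\le m$: then $\Sigma_0$ and $\Sigma_+$ are ordered in time with $\sup_{\Sigma_+}H\le\inf_{\Sigma_0}H$, so the sandwich result yields a slice of any intermediate mean curvature between them, in particular a CMC slice. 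Thus the entire problem is to manufacture one future slice whose mean curvature is everywhere at least as negative as a fixed constant; the robust way to guarantee this is to exhibit a future crushing singularity, i.e.\ slices $\Sigma_i$ approaching the future edge with $\sup_{\Sigma_i}H\to-\infty$, after which the crushing-singularity argument given above applies verbatim.

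The mechanism driving the mean curvature negative is the timelike convergence condition through the Raychaudhuri equation. For the vorticity-free normal congruence of a spacelike slice, the expansion $\theta$ (whose restriction to the slice is $H=\tr k$) obeys
\[
\frac{d\theta}{d\tau}=-\tfrac13\theta^2-\sigma^2-R_{\mu\nu}u^\mu u^\nu\le-\tfrac13\theta^2,
\]
so $\theta$ is driven toward $-\infty$, and, as recorded in Section~\ref{techIntro}, a slice with $H\le-C$ has all future normal geodesics of proper length at most $3/C$. The heuristic is that absence of future rays forces the remaining future proper time to collapse along the congruence, and one wants to translate this curve-wise collapse into slices with uniformly large negative mean curvature. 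Concretely, I would introduce the future cosmological time function $\tau^+(p)=\sup\{\,L(\gamma):\gamma\text{ a future timelike curve from }p\,\}$; a limit-curve argument in the globally hyperbolic, compact-Cauchy setting should show that ``no future ray'' is equivalent to $\tau^+<\infty$ everywhere, and compactness of the Cauchy surfaces should upgrade this to control on its level sets. One then attempts to use (smoothed) level sets of $\tau^+$, or surfaces foliating a collar of the future boundary, as the desired future barriers $\Sigma_+$.

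The hard part will be exactly this last translation, and I expect it to be the genuine obstruction. Finite future length is \emph{weaker} than the volume collapse built into a strong-curvature (hence crushing) singularity, so a priori $\calV$ could be future incomplete while its future singularity is ``mild'' and no sequence of smooth Cauchy surfaces has $\sup H\to-\infty$; such incomplete-but-not-crushing futures are precisely the configurations one must rule out. Moreover, the cosmological time function is in general only lower semicontinuous, its level sets need not be smooth spacelike slices, and the normal congruence develops caustics where it ceases to define a slice at all, so obtaining uniform mean-curvature bounds on genuine Cauchy surfaces is delicate. If the crushing route fails, the fallback is a direct construction in the spirit of \cite{Bartnik88}: use the structure of the future boundary (e.g.\ $\partial I^-(p)$, or the indecomposable pasts comprising the $c$-boundary) as a barrier and solve the prescribed-mean-curvature problem directly, or instead show that ``no future ray'' together with non-crushing forces one of the already-available sufficient conditions, namely a single-point future $c$-boundary \cite{MT80}. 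Finally, the maximal/static horn of the dichotomy is handled by Theorem~\ref{thm:LineMeansProduct}: if the constructed foliation turns out to consist of maximal slices covering $\calV$, then $\calV$ is a metric product and still contains a CMC slice. Combining the future and past reductions with this rigidity statement would close the gap and link back to the main conjecture.
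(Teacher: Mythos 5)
You have not proved the statement, and you could not have: it is Conjecture~\ref{mainConj}, and the paper offers no proof of it --- it is presented as open, with the paper only noting that it \emph{holds} under strictly stronger hypotheses (crushing singularities, strong curvature singularities \cite{MT80}, or Bartnik's condition that $\calV - I(p)$ is compact \cite{Bartnik88}). Your framing is the correct and standard one --- reduce by time reversal to the future case, seek a future barrier $\Sigma_+$ with $\sup_{\Sigma_+} H \le \inf_{\Sigma_0} H$, and invoke the interpolation result --- and your closing appeal to Theorem~\ref{thm:LineMeansProduct} for the maximal/static horn is consistent with how the paper motivates the conjecture. But the entire content of the conjecture is the step you flag and then leave open: converting ``every future timelike path has finite length'' into slices with uniformly negative mean curvature. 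Since you concede exactly that step, the proposal is a strategy outline, not a proof.

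Moreover, the specific mechanism you lean on is known to be obstructed, and the paper says so: Rendall's dust example \cite{Rendall97} has a shell-crossing singularity at which the CMC foliation halts at \emph{finite} mean curvature, so a global singularity need not be crushing; hence ``incomplete $\Rightarrow$ crushing'' is false as stated (at least with dust), and your main route cannot go through verbatim --- one would have to show instead that non-crushing incomplete futures nonetheless admit a CMC slice by some other means, which is precisely the unresolved difficulty. Two smaller technical points reinforce this. First, your Raychaudhuri inequality only forces $\theta \to -\infty$ along geodesics where $\theta$ is already negative; starting from a slice of mixed-sign $H$ it produces no barrier, and caustics destroy the congruence before a new slice forms. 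Second, the cosmological time function $\tau^+$ is in general only lower semicontinuous with level sets that need not be smooth, spacelike, or Cauchy, so ``smoothed level sets of $\tau^+$'' is a hope rather than a construction; your fallback conditions (single-point future $c$-boundary, strong curvature singularity) are each strictly stronger than incompleteness and are not implied by it. In short: the reduction is right, the admitted gap is the whole problem, and the one concrete mechanism proposed is contradicted by a known example cited in the paper itself.
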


In this formulation the conjecture essentially says that,
 if you have a global singularity, then you
have a CMC slice. 

For certain kinds of singularities, such as the crushing and strong
curvature singularities we discussed earlier, we know this conjecture holds. 
Unfortunately, not all singularities are necessarily of the crushing type. For 
instance, in \cite[pg 3594]{Rendall97}, Rendall constructs an example
where a dust shell-crossing singularity causes a CMC foliation to stop at a finite
mean curvature, and so the
singularity is not crushing. Of course, this example uses dust, which is known
to cause problems,
and still has CMC slices.  We are aware of no other cosmological MGHDs,
ending in a global
singularity, which do not have a crushing singularity.

Both of Bartnik's proofs in fact prove something stronger than that the spacetime
has no CMC slices. They also show that the spacetimes
have no slices of constant signed mean curvature. This observation leads to
a conjecture.

\begin{conj}\label{conj:CMC=constSign}
  A cosmological spacetime has a CMC slice if and only if it has a slice of constant
  signed mean curvature.
\end{conj}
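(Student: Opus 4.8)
The plan is to reduce this equivalence to the incompleteness lemma of Hawking and Ellis together with Conjecture \ref{mainConj}. One implication is immediate and requires nothing beyond the definitions: a CMC slice has $H$ equal to a single real number, which certainly has a constant sign, so every CMC slice is in particular a slice of constant signed mean curvature. All of the content lies in the converse.

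For the converse, suppose $\Sigma$ is a slice on which $H$ has a constant sign. If $H\equiv 0$ then $\Sigma$ is maximal, hence already a CMC slice, and there is nothing to prove. Since the hypotheses defining a cosmological spacetime are invariant under reversing the time orientation (global hyperbolicity and the timelike convergence condition \eqref{def:TCC} are both time-symmetric, while $H\mapsto -H$), it suffices to treat the case $H<0$ everywhere, the case $H>0$ following by time reversal. As $\Sigma$ is compact, $\sup_\Sigma H=-c$ for some $c>0$, so the mean curvature is bounded strictly away from zero. The incompleteness result recorded in Section \ref{techIntro}, from \cite[pg 274]{HE}, then yields a uniform bound of $3/c$ on the future length of every timelike path meeting $\Sigma$; since $\Sigma$ is a Cauchy surface, every future ray must meet it, and so $\calV$ admits no future ray. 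Equivalently, every future timelike path in $\calV$ is incomplete.

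At this stage I would invoke Conjecture \ref{mainConj}: a cosmological spacetime in which every future (or past) timelike path is incomplete possesses a CMC slice. Granting that conjecture, the single-signed slice $\Sigma$ forces a global future singularity and hence a CMC slice, completing the converse. In other words, Conjecture \ref{conj:CMC=constSign} is a consequence of Conjecture \ref{mainConj}. This reduction is fully consistent with Bartnik's examples, in which the absence of a line simultaneously forces the absence of a CMC slice and, as Bartnik in fact verifies, the absence of any constant signed slice.

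The hard part is therefore Conjecture \ref{mainConj} itself, which is open; the argument above shows only that the single-signed existence problem is no more difficult than the existence of a CMC slice under a global singularity. A secondary issue is the borderline reading of ``constant signed'' that would permit $H\le 0$ with $H$ vanishing at some points: then $\sup_\Sigma H=0$, the length bound $3/c$ degenerates, and the clean incompleteness step breaks down. Resolving that case would require either adopting the strict interpretation ($H$ of one strict sign, or $H\equiv 0$), under which no further work is needed beyond Conjecture \ref{mainConj}, or establishing a focusing estimate robust to isolated zeros of the expansion; I expect the strict interpretation is the intended one.
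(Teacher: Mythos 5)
Your proposal takes essentially the same route as the paper: the forward implication is noted as obvious, and for the converse the paper likewise argues that a slice of constant signed (non-maximal) mean curvature forces global incompleteness to the future or past via the Hawking--Ellis bound of $3/{-H}$, and then invokes Conjecture \ref{mainConj} to produce the CMC slice. Your caveat about strict versus weak sign even matches the paper's own ``(non-zero)'' hedge, so the reduction you give is exactly the one intended.
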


Since a slice of constant (non-zero) signed mean curvature implies that there can 
be no rays, the main conjecture \ref{mainConj} would imply this conjecture. However,
it provides an interesting intuitive picture for spacetimes without CMC slices.

If this conjecture is true, then every slice must have a region
with positive mean curvature and
a region with negative mean curvature. Because of this, 
there appears to be one region of the spacetime that expands for
all time, and another that expands to the past for all time. (This also lines up
with the existence of rays, as in the main conjecture \ref{mainConj}.)
If these regions were
causally linked, you might expect to find a line going from the one region to the
other, which would lead to a contradiction. Thus the spacetime has (at least) two regions
separated by some sort of ``wormhole," as in Bartnik's example, with on region expanding,
the other contracting. 

The one implication of the conjecture \ref{conj:CMC=constSign} is obvious.
 he other makes sense heuristically. If $\calV$ has
a slice of constant signed (non-maximal) mean curvature, there is a global singularity
to the future or past. By the previous conjecture, there must be a CMC slice.

More directly, suppose that 
$\calV$ has a slice of constant signed (non-maximal) mean curvature. 
It is possible to evolve the slice
by mean curvature flow. On a Lorentzian manifold, the mean curvature flow tends to
expand slices, and so the slice will evolve into the expanding universe direction, i.e.,
towards $H\equiv 0$. As the universe is expanding, you would not expect
singularities to appear before you reach the point where the new mean curvature
is globally closer to zero than the original mean curvature.\footnote{Indeed, this 
can almost be carried out. Using basic estimates, it is not hard to show that 
as long as all timelike geodesics exist for a certain explicit amount of proper time
into the expanding direction (based \emph{only} on the starting mean curvature),
then you can evolve the slice long enough to get the desired slice with small mean 
curvature. Unfortunately, in trying to show that you can always evolve long
enough to find such a slice, you need an estimate of certain quantities. The type
of estimate that would seem to work is exactly the kind of estimate that CMC slices
are used for in evolution problem papers--a relationship between the second fundamental
form and the lapse. Thus the problem becomes somewhat circular.}
Using the standard interpolation result, there is a slice of constant mean curvature
between them.

The most useful condition for showing the existence or nonexistence of a CMC slice
would refer only to the initial data. 
For instance, if a set of vacuum initial data is of 
Yamabe class $Y^+$, the closed universe recollapse conjecture \cite{BGT86}
says that the spacetime 
will begin and end in a global singularity. If this conjecture is true, and
the main conjecture \ref{mainConj} is true, then every Yamabe positive vacuum
initial data set
leads to a spacetime with CMC slices. 

Another idea is to try to find conditions on a region of initial data such that 
that region, when evolved, will contain rays. For instance, like in Bartnik's example,
if there is a region of the initial data with positive mean curvature, 
and is is separated from other regions by a black-hole-type region, then it seems
reasonable that this region should expand for all time, and thus have rays. Then,
if an initial data set had one region expanding to the future in this way, and one
to the past, then, by Theorem \ref{thm:rays?NoCMC}, the spacetime cannot have any CMC
slices. This condition would be especially interesting if it were stable under
perturbations, which it seems like it should be.

The difficulty in any attempt to find conditions on initial data is that 
proving that there are or are not CMC slices in the spacetime requires asking 
questions about the long time behavior of the spacetime. Results of this nature
tend to be difficult to prove, and many of the best results require the existence
of a CMC slice. This difficulty is the reason that the only known condition
on initial data is the very restrictive time-antisymmetric condition used earlier.

\section{Are CMC Slices Generic?}

The most important question about CMC slices is their genericity. If the existence
of CMC slices is generic, then the special examples where they do not exist can
mostly be ignored, and CMC slicing can be used for proving generic properties.
Unfortunately, to our knowledge, 
there no evidence that this is the case, other than the weak evidence that all
known examples are all very special, all on $T^3\# T^3$ with time-antisymmetry.

In any question of genericity, it is vital to choose a useful and meaningful topology.
In questions about spacetimes, there are several choices that could be made.

The simplest choice would be a global $C^{2,\alpha}$ topology. Using this
topology, Choquet-Bruhat \cite{CB76}
 showed that spacetimes with CMC slices form an open set.
In particular, she proves that perturbations in a neighborhood of the evolution of a
CMC slice always has another CMC slice. This is done using the linearization of the
mean curvature operator.

However, in this topology, the existence of CMC slices is not generic!
Due to Theorem \ref{thm:rays?NoCMC}, if there are timelike rays, but no timelike line,
then the spacetime does not have a CMC slice. However, the unboundedness of rays is
stable under a global $C^{2,\alpha}$ perturbation. Also, if there were a line, 
Theorem \ref{thm:LineMeansProduct} says the spacetime would be static.
However, Bartnik's example is not $C^{2,\alpha}$ close to a
static spacetime, and so  a small enough $C^{2,\alpha}$ perturbation could
not create a line.
Together, this means that every small, global $C^{2,\alpha}$ perturbation of Bartnik's 
example would still have no CMC slices.

We would be among the first to admit, though, that the global $C^{2,\alpha}$ topology,
along with any other global topology, is probably not the right one for this problem.
For instance, since we are dealing only with MGHDs, spacetimes end in singularities or
exist for all time. In either case, something is unbounded.
Thus, $C^{2,\alpha}$ close is extremely restrictive.

A related problem is that there are one parameter families of initial data which
are not close to each other in the $C^{2,\alpha}$ norm. For example,
initial data for the flat toroidal, static universe can be perturbed into initial
data which evolve into spacetimes with either global future or global past singularities.

It seems that a topology based on initial data is more reasonable for this problem. 
We propose to use the $C^{2,\alpha}$ topology on initial data sets $(M, g, K)$. 

Using this topology, the existence of CMC slices is still an open condition.
Consider initial
data evolving into a spacetime with at least one CMC slice. That CMC slice is within
some amount of coordinate time $t_0$ from the initial slice. 
For any initial data sufficiently close to the original data,
the spacetimes are $C^{2,\alpha}$ close
within $t_0$ of the initial data. Again using Choquet-Bruhat's argument in \cite{CB76},
the nearby spacetime also has a CMC slice, of the same mean curvature.

Unfortunately, as with the global $C^{2,\alpha}$ topology, it appears that 
the existence of CMC slices is not a generic condition.

\begin{conj}
There is an open set of initial data (in the $C^{2,\alpha}$ norm) such that the
associated spacetimes do not have CMC slices.
\end{conj}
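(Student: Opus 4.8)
The plan is to anchor the desired open set at Bartnik's example, or at one of the vacuum examples of Chru\'sciel--Isenberg--Pollack \cite{Bartnik88, CIP05}: the maximal development has a future ray, a past ray, and no line, so by Theorem~\ref{thm:rays?NoCMC} it has no CMC slice. Since the topology is placed on the initial data $(M,g,K)$ rather than on the global spacetime, I cannot transport this structure by global $C^{2,\alpha}$--closeness. Instead I would isolate features of the initial data that are (a) manifestly open in the $C^{2,\alpha}$ topology and (b) sufficient to force both a pair of rays and the absence of a line in the evolution, and then take the open set to be a neighborhood of the anchor data in which these features persist. The existence claim then follows by evolving any nearby data and applying Theorem~\ref{thm:rays?NoCMC}.

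Excluding lines is the more robust half, and I would do it first. The separating ``black-hole-type'' region should be encoded at the level of initial data as a trapped surface on the neck, which is an open condition on $(g,K)$ because it is governed by the signs of the null expansions. By the Hawking--Penrose singularity theorem (whose hypotheses are supplied by global hyperbolicity together with the timelike convergence condition~\eqref{def:TCC}), the development is then causally geodesically incomplete; a metric product over a compact Cauchy surface is geodesically complete, so the spacetime is not a metric product, and by the contrapositive of Theorem~\ref{thm:LineMeansProduct} it admits no line. This argument depends only on the open trapped-surface condition, so it survives small perturbations.

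Producing the rays is the hard half. Following the heuristic above, I would select on the initial slice a region $\Omega_+$ with mean curvature $H>0$ (expanding to the future) shielded behind the neck, and a mirror region $\Omega_-$ with $H<0$ (expanding to the past), and argue that the future development of $\Omega_+$ contains a future ray while the past development of $\Omega_-$ contains a past ray. The mechanism to be made rigorous is that inside such a shielded expanding channel the mean curvature should stay bounded away from the singular regime, so a suitably chosen observer survives for infinite proper time; concretely one would propagate a lower bound on proper length along a congruence emanating from $\Omega_\pm$, using \eqref{def:TCC} and the Raychaudhuri equation to control the expansion, modeled on the exterior Schwarzschild region where such complete worldlines provably exist, and transport the picture to nearby data by finite-time continuous dependence of the evolution \cite{CB76}.

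The main obstacle is exactly this last step carried out to infinite time. Showing that the expanding channel truly reaches future infinity --- that the perturbation does not produce an unexpected singularity that swallows the would-be ray before infinite proper time elapses --- is a genuine long-time control problem for the \emph{non-CMC} evolution. This runs into the same circularity flagged in the footnote to the discussion of Conjecture~\ref{conj:CMC=constSign}: the estimates that would naturally close the argument, relating the second fundamental form to the lapse, are precisely those whose cleanest derivations presuppose a CMC gauge. Breaking that circularity, most plausibly by exploiting the trapped/shielded structure to obtain an a priori lapse/second-fundamental-form bound \emph{within} the expanding channel without CMC slicing, is where the proof would stand or fall; absent such a bound, the fallback is to verify the ray condition directly for the specific (highly symmetric) anchor data and then argue, ingredient by ingredient, that each piece persists under a $C^{2,\alpha}$ perturbation of the initial data.
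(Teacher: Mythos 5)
First, note that the statement you are proving is a \emph{conjecture}: the paper offers no proof, only a heuristic reduction. The paper's argument is to anchor at Bartnik's example, observe that a nearby data set evolving to a spacetime with a CMC slice could not be static (hence would have a non-maximal CMC slice, hence all future or past timelike geodesics of finite length), and conclude that density of the CMC condition would force a small perturbation to collapse an expanding toroidal region --- which is then made precise only as Conjecture~\ref{conj:PerturbationsDon'tCauseCollapse}, the persistence of rays under $C^{2,\alpha}$ perturbations of initial data. Your skeleton is essentially the same: anchor at \cite{Bartnik88,CIP05}, apply Theorem~\ref{thm:rays?NoCMC}, and split the work into a ``no line'' half and a ``rays persist'' half, with the second half being exactly the paper's Conjecture~\ref{conj:PerturbationsDon'tCauseCollapse}. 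You are candid that this half is open, and you correctly diagnose why (the same lapse/second-fundamental-form circularity the paper flags in its footnote), so your proposal, like the paper's discussion, is a reduction rather than a proof. Where you genuinely improve on the paper is the ``no line'' half: the paper merely asserts that perturbed data ``could not lead to a static spacetime,'' whereas you encode the obstruction at the initial-data level as a strictly trapped surface on the neck --- a manifestly open condition on $(g,K)$ --- then run singularity theorem $\Rightarrow$ incompleteness $\Rightarrow$ not a metric product (a product over a compact Cauchy surface is complete) $\Rightarrow$ no line by the contrapositive of Theorem~\ref{thm:LineMeansProduct}. That chain is clean and is precisely the kind of stable, initial-data-level formulation the paper says is needed.

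Two caveats on that half, one of which is a real gap in your argument as written. First, the Hawking--Penrose theorem is \emph{not} supplied by global hyperbolicity together with the timelike convergence condition~\eqref{def:TCC} alone: since the Cauchy surfaces here are compact, Penrose's noncompact-slice theorem is unavailable, and Hawking--Penrose additionally requires the generic condition (see \cite{HE}), which the highly symmetric anchor spacetimes need not satisfy, and which you must either verify or replace (e.g., by a direct incompleteness argument through the trapped region). Second, you must check that the chosen initial slice actually carries a \emph{strictly} trapped surface: on a time-symmetric neck the minimal sphere is only marginally trapped, with both null expansions vanishing, which is not an open condition; you need to select a Cauchy surface entering the black-hole-type region, or perturb to strict trapping, before openness applies. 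With those repairs, your no-line half is sound and sharper than the paper's; the ray half remains open for you exactly as it does for the authors.
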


Consider Bartnik's example that we discussed earlier. To
remind the reader, his spacetime is two toroidal universes, (one expanding to the
future, the other to the past,) separated by a Schwarzschild bridge. If the existence
of CMC slices was a dense condition, a generic perturbation of initial data for
Bartnik's example should evolve into a spacetime with CMC slices.

This perturbed data could not lead to a static spacetime, and so it has a non-maximal 
CMC slice. Thus, all future (or past) geodesics are incomplete. In other words, a
generic perturbation causes at least one of the expanding toroidal universes to
collapse. But a small perturbation shouldn't cause global collapse. To make this
a bit more precise, we make the following conjecture.

\begin{conj}\label{conj:PerturbationsDon'tCauseCollapse}
If a compact, vacuum initial data set evolves into a non-static spacetime such that there is a 
family of future rays intersecting the initial Cauchy surface in an open set, then any 
small $C^{2,\alpha}$ perturbation of the initial data evolves into a spacetime
with at least one future ray.
\end{conj}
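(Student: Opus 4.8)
The plan is to convert the infinite-time condition ``admits a future ray'' into a finite-time, $C^{2,\alpha}$-\emph{open} condition on the evolving spacetime, and then to show that this open condition propagates forward to produce an actual ray. First I would use the hypothesis to localize. Since the family of future rays meets the initial Cauchy surface in an open set $U$, I would seek a smaller open set $U' \Subset U$ together with a compact spacetime ``world-tube'' $\mathcal{T}$ around a central ray such that, up to some finite coordinate time $t_0$, two quantitative features hold: (i) an \emph{expansion} condition, that the future orthogonal expansion $\theta$ (equivalently, the relevant signed mean curvature) is bounded below by a constant $c>0$ on cross-sections of $\mathcal{T}$; and (ii) a \emph{shielding} condition, that the causal past of each point of $\mathcal{T}$ re-enters $\mathcal{T}$ rather than reaching the potentially collapsing remainder of $\calV$ (the analogue, in Bartnik's example, of the Schwarzschild bridge separating the expanding torus from the rest). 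Both features should be extracted from the rays themselves: along a complete future timelike geodesic in \emph{vacuum}, Raychaudhuri's equation
\[
\frac{d\theta}{d\tau} = -\tfrac13\theta^2 - 2\sigma^2 - R_{ab}v^a v^b
\]
reduces (since $R_{ab}=0$) to $d\theta/d\tau = -\tfrac13\theta^2 - 2\sigma^2 \le 0$, so by the focusing theorem $\theta$ can never become negative on a ray and the shear $\sigma$ is square-integrable, yielding the uniform lower bound in (i). For (ii), a line would force $\calV$ to be a static metric product by Theorem~\ref{thm:LineMeansProduct}, contradicting the non-static hypothesis; hence $\calV$ admits no line, and this absence of a globally maximizing geodesic between the expanding and collapsing regions is exactly the causal decoupling that (ii) formalizes.

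Second, I would prove a \emph{persistence lemma}: an expanding, shielded tube $\mathcal{T}$ as above contains a future ray. The mechanism is that shielding together with finite propagation speed lets one control the evolution inside $\mathcal{T}$ using only data in a (spatially growing but shielded) region, while the bound $\theta \ge c > 0$ and Raychaudhuri prevent the central geodesic from focusing, so that it extends to infinite affine length. The point of phrasing matters: the hypotheses of this lemma are established at the \emph{finite} time $t_0$ and are genuinely \emph{open}, since positivity of $\theta$ on a compact region and the causal-separation condition (ii) are both stable under small $C^{2,\alpha}$ changes of the evolved metric.

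Third, I would transfer the finite-time condition to the perturbed data. By continuous dependence of the Einstein evolution on $C^{2,\alpha}$ initial data --- precisely the Cauchy-stability argument of Choquet-Bruhat in \cite{CB76} invoked earlier for openness of CMC existence --- any sufficiently small $C^{2,\alpha}$ perturbation of $(M,g,K)$ produces a spacetime whose metric is $C^{2,\alpha}$-close to $\calV$ on the compact slab up to time $t_0$, hence still satisfies the open conditions (i) and (ii) on a slightly smaller tube. The persistence lemma then delivers a future ray in the perturbed spacetime, as required.

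The hard part, and the genuine content of the conjecture, is the persistence lemma itself: it is a \emph{local-in-space, global-in-time} stability statement, and bridging the gap between finite-time continuous dependence and the infinite-time ray condition is exactly where perturbations could \emph{a priori} accumulate over infinite proper time and trigger collapse. Making the shielding robust --- showing that the trapped, black-hole-type region decoupling the expanding tube from the collapsing remainder itself persists under perturbation and continues to shield for all time --- is the crux, and is closely related to local versions of the nonlinear stability theorems for expanding cosmologies. The saving grace is that we begin with an \emph{open} family of rays rather than a single one: this gives quantitative room to absorb the perturbation and still certify that at least one ray survives.
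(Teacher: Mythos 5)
The statement you are trying to prove is stated in the paper as a \emph{conjecture}: the paper offers no proof, only heuristic discussion, so there is no argument of record to compare yours against. Your proposal is a reasonable strategy sketch, but it is not a proof, and the gap is not merely the one you flag at the end. Your ``persistence lemma'' \emph{is} the conjecture: the finite-time reduction in your third step is the easy part (it is the same Cauchy-stability slab argument from \cite{CB76} that the paper itself uses to show openness of the CMC condition), and everything of substance is deferred to the claim that an expanding, shielded tube at time $t_0$ yields a ray for all time. The paper explicitly identifies this infinite-time step as open in exactly your setting: in the glued hyperbolic-FLRW example it notes that the expansion rate $\rho^2$ is ``just barely too slow'' to guarantee the relevant domain of dependence persists, and the footnote on mean curvature flow observes that the needed long-time estimates are precisely the ones whose proofs currently require a CMC slice --- the circularity the paper warns about. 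So the reduction buys essentially nothing in difficulty.

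Moreover, the two ``open conditions'' you extract do not follow from the hypotheses as claimed. For (i): the paper defines a ray as a timelike \emph{path} of infinite length, not a maximizing geodesic, so the focusing argument does not apply directly; $\theta$ is a congruence quantity, and the absence of conjugate points needed to keep $\theta \geq 0$ is only guaranteed along globally maximizing geodesics (i.e., lines, which you have excluded). Even granting a geodesic congruence with $\theta \geq 0$, Raychaudhuri gives monotone non-increase, hence no uniform lower bound $\theta \geq c > 0$; indeed in the paper's model expanding spacetimes $\theta \to 0$ along every ray, so condition (i) fails in precisely the examples motivating the conjecture. For (ii): from non-staticity and Theorem \ref{thm:LineMeansProduct} you correctly conclude there is no line, but the absence of a globally maximizing geodesic is far weaker than the causal-decoupling property you need; nothing prevents the causal past of points in your tube from reaching the collapsing region, and the persistence under perturbation of the ``black-hole-type'' shielding region is itself only a heuristic expectation in the paper (``we would expect it to stay small''), not an established fact. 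In short, your architecture (finite-time open conditions plus Cauchy stability plus a global persistence statement) is a sensible way to organize an attack, but both quantitative inputs are unavailable and the persistence statement is equivalent to the original open problem.
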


This is not the same as stability of the spacetime. For instance, the perturbation
could coalesce into a black hole type region, but as long as it didn't cause the
global collapse of the expanding region, the conjecture would still hold. 

If this conjecture were true for initial data with dust, Bartnik's example would 
immediately imply that the existence of CMC slices is not generic. Similarly,
the examples of Chru{\'s}ciel, Isenberg, and Pollack, if they contain a ray, would
also imply that the existence of CMC slices is not generic.

Another heuristic example can be found by gluing together two stable 2-tori. 
In two papers, \cite{AM04, AM11}, Andersson and Moncrief give vacuum, hyperbolic FLRW models with
the metric $-d\rho^2 + \rho^2 \gamma$, where
$\gamma$ is the hyperbolic metric on the 2-torus slice. 
Importantly, they prove these spacetimes
are stable attractors of nearby initial data, at least in the expanding direction.
If we take two of the same constant-$\rho$ slices, but change the sign of
the second fundamental form $k$ on the
second one, we can do the same gluing they used in \cite{CIP05} in order to glue
these together anti-symmetrically in $k$ and symmetrically in the metric. Then, as with
their examples, the evolution can have no CMC slices due to the symmetry and bad topology.

Of course, these examples are really just a special case of the examples in \cite{CIP05}.
The advantage these have is that the base hyperbolic space we are using is known to
be stable. That means, after the gluing, in the domain of dependence 
(in the expanding direction) away from the
gluing region, the spacetime is close to, and in fact converging toward, the original
FLRW. If this domain existed for all time under any perturbation,
the new spacetime would have infinite
length timelike paths, and thus no CMC slices. That would show that the set of initial
data that evolve to have CMC slices is \emph{not} dense!

Unfortunately, it is unclear whether this domain of dependence lasts for all time.
In the original FLRW
spacetime, it is straightforward to check that the expansion rate of the universe
is just barely too 
slow to guarantee that. (If the $\rho^2$ were replaced by $\rho^2 \ln^3(\rho)$,
for instance, it would work.)

However, perturbations of this glued spacetime likely still have no CMC slices.
Heuristically, the wormhole bridge connecting 
the two glued tori should act like the Schwarzschild bridge in Bartnik's example.
Since it is a black-hole-like region, we would expect it to stay small, only lightly
affecting the spacetime outside its immediate neighborhood.
If this were true, the spacetime would not recollapse in the expanding 
direction, and so all perturbations would lead to spacetimes without CMC slices.

It is, of course, possible that spacetimes with CMC slices are generic.
If that were true, for any initial data set $S$ leading to a spacetime without 
CMC slices, there are sets of initial data $S_i$,
leading to spacetimes \emph{with} CMC slices, converging to the initial data in the 
$C^{2,\alpha}$ topology. Since the corresponding spacetimes $\calV_i$ 
will then converge on 
any compact interval of time (of the spacetime $\calV$ evolved from $S$), 
if the CMC slices of the $\calV_i$ were well behaved, then the boundary spacetime 
$\calV$ would also contain a CMC slice. Thus, the CMC slices of the $\calV_i$ cannot
be well behaved.

There are three ways the convergence of CMC slices could fail. 
The first is that for each $H$,
the CMC slice in $\calV_i$ of mean curvature $H$ could ``run off" to infinity, so
that, in the limit, no compact interval of time could contain the slices. The second
is that these slices may become null in the limit. For the third,
it is possible that, for any $H$,
for $i$ large enough, $\calV_i$ does not have a CMC slice of mean curvature $H$.
In other words, the foliation by CMC slices in $\calV_i$ covers a vanishing
interval of the possible mean curvatures.

In any of these cases, the boundary spacetime $\calV$ should have slices
of arbitrarily near constant mean curvature. If the converging slices are
non-maximal, the slices in $\calV$ should be near-CMC in the traditional sense
of $\|\nabla H\|$ being small compared to $\|H\|$. If the converging slices are
maximal (or approach maximal), as they would be in Conjecture \ref{conj:CMC=constSign}
were true, the slices in $\calV$ would be near-CMC but perhaps only in the sense
that the average of $H$ could be made arbitrarily small.

Thus, heuristically, the existence of CMC slices is a generic property if and only
if every spacetime has near-CMC slices. Thus, one way to check whether we should
expect this property to be generic is to check whether the known examples have 
arbitrarily near-CMC slices. If they do not, we would expect that the existence of
CMC slices is not generic. If they do, we cannot, however, conclude the opposite.

Let us mention that this heuristic argument has consequences for the initial data 
problem. The conformal method, as mentioned in the introduction, parameterizes
all CMC initial data. While it fails for far-from-CMC data, it still behaves well for
near-CMC data, including existence and uniqueness. 
If it were true that all spacetimes have near-CMC slices, the conformal method
may still be able to provide a reasonable 
parameterization of initial data for all spacetimes.

While genericity is the main question, it is not even known whether spacetimes with
CMC slices and those without are even in the same connected component. To prove that
they aren't, one could try to show that the no-CMC-slice condition is open, or 
that the CMC-slice condition is closed. Both of those methods, if our main Conjecture
\ref{mainConj} is true, are questions about long term stability of the spacetime:
If a spacetime has a ray, do all nearby spacetimes also have a ray? 

At first glance it may seem like a spacetime with a single (future) ray could be
perturbed to a spacetime without a ray. However, do such spacetimes even exist?
By conjecture \ref{conj:CMC=constSign}, which says that spacetimes without
CMC slices will have no slices of constant sign, we expect that a spacetime
without CMC slices will be expanding in some region. An expanding region should
have an open set of rays.

Additionally, an expanding
region should be stable. The boundary example of the static universe is unstable, 
and cannot occur since there is a region that is expanding to the past as well.
Any sufficiently small perturbation of the expanding region, as we've argued before,
should not cause the expanding region to collapse. This would suggest that having
no CMC slices is also an open condition, and thus forms a disconnected
component of initial data.
%\mnote{JD: should I be confident enough of this to make it an actual conjecture?}

One could also try to show connectedness direction.
%\mnote{JD: asked Jim if it is known that all initial data, on a given manifold, is connected.  Haven't heard back} 
The most obvious way is to take the initial data for an example with no CMC slices, and treat it as seed data for the conformal constraint equations. Then one can
make a path of seed data from that given data to one with constant mean curvature. 
The difficulty is that one must solve the conformal constraint equations for each set
of seed data in order to find initial data for a spacetime. Unfortunately, it is
now well established (see \cite{DHKM17}) that for far-from-CMC seed data, which we,
of necessity, would have to work with, the conformal constraint equations are
very complicated, and it is unclear that they have solutions for that seed data.

\section{Final Comments}

Constant mean curvature Cauchy surfaces are useful and much easier to work with than
their non-CMC counterparts. This is true both for initial data and for the evolution
of the data. Unfortunately, while much is known about foliations by CMC slices,
\emph{given} a starting CMC slice, less has been written about whether or not
spacetimes have CMC slices at all.

Importantly, many results assume that the spacetime has a CMC slice. Unfortunately,
it is currently unknown whether or not that assumption is true for a generic spacetime.
Our conjecture is that it is \emph{not} a generic condition. Since many of the most
important conjectures in mathematical relativity concern generic spacetimes, this
means that results that assume the existence of a CMC slice are not applicable to
those problems, unless
the assumption of the existence of a CMC slice can somehow be removed.

%%Could I make a Bartnik like example with homogeneous dust exterior, with hyperbolic metric? If I could, then maybe I could compactify the exterior part to a 2-torus, and show that that two-torus is stable. 

%%or use CMC slicing of Bartnik's example to show that IT is stable. It's AC, I think.

\bibliographystyle{abbrv}
\bibliography{MeanCurvature}
\end{document}